\newtheorem{lemma}{Lemma}
\newcommand{\VTNote}[1]{}
\newcommand{\commentOut}[1]{}
\def\makeItShort{}
    \newcommand{\shortVersion}[1]{}
    \newcommand{\longVersion}[1]{#1}
    \newcommand{\shortVersion}[1]{#1}
    \newcommand{\longVersion}[1]{}
\begin{document}

%\title{More than one Author with different Affiliations}
%\author[1]{Author A\thanks{A.A@university.edu}}
%\author[1]{Author B\thanks{B.B@university.edu}}
%\author[1]{Author C\thanks{C.C@university.edu}}
%\author[2]{Author D\thanks{D.D@university.edu}}
%\author[2]{Author E\thanks{E.E@university.edu}}
%\affil[1]{Department of Computer Science, \LaTeX\ University}
%\affil[2]{Department of Mechanical Engineering, \LaTeX\ University}

%\renewcommand\Authands{ and }

\title{Auditing Australian Senate Ballots}

\author[1]{Berj Chilingirian\thanks{Authors are grouped by institution, in alphabetical order, and then listed in alphabetical order within each institution.}}
\author[1]{Zara Perumal}
\author[1]{Ronald L. Rivest} 
\author[2]{\hspace{1cm}Grahame Bowland\thanks{Grahame Bowland is a member of the Australian Greens.  His contribution to this project has consisted entirely of help in implementing the Australian Senate counting rules and facilitating Bayesian audits using his code. The techniques here are non-political.}}
\author[3]{Andrew Conway\thanks{Andrew Conway is a member of the Secular Party.   }}
\author[4]{Philip B. Stark}
\author[5]{\hspace{1cm}Michelle Blom}
\author[5]{Chris Culnane}
\author[5]{Vanessa Teague}

\affil[1]{Computer Science and Artificial Intelligence Laboratory, 
Massachusetts Institute of Technology.  \url{[berjc, zperumal, rivest]@mit.edu}}
\affil[2]{\url{erinaceous.io} \\
\url{grahame@angrygoats.net}}
\affil[3]{Silicon Econometrics Pty. Ltd.\\
\url{andrewsa@greatcactus.org}}
\affil[4]{Department of Statistics, University of California, Berkeley.  \url{stark@stat.berkeley.edu}}  
\affil[5]{Department of Computing and Information Systems, University of Melbourne. \url{[michelle.blom, christopher.culnane, vjteague]@unimelb.edu.au}}

\commentOut{
\author{Berj Chilingirian, 
Zara Perumal, 
Ron Rivest \\
{Massachusetts  Institute of Technology} \\
\url{[berjc, zperumal, rivest]@mit.edu}
}
\author{Grahame Bowland\thanks{Grahame Bowland is a member of the Australian Greens.  His contribution to this project has consisted entirely of help in implementing the Australian Senate counting rules and facilitating Bayesian audits using his code. The techniques here are non-political. } \\ 
\url{erinaceous.io} \\
\url{grahame@angrygoats.net}}
\author{Andrew Conway\thanks{Andrew Conway is a member of the Secular Party.} \\
Silicon Econometrics \\
\url{andrewsa@greatcactus.org}}
\author{
Philip Stark \\
Department of Statistics \\
University of California, Berkeley \\
\url{stark@stat.berkeley.edu}}
\author{Michelle Blom, Chris Culnane, Vanessa Teague \\
Department of Computing and Information Systems \\
University of Melbourne  \\
\url{[michelle.blom, christopher.culnane, vjteague]@unimelb.edu.au}}
}
\maketitle

% Grahame: erinaceous.io

\begin{abstract}
We explain why the AEC should perform an audit of the paper Senate  ballots against the published preference data files.  We suggest four different post-election audit methods appropriate for Australian Senate elections.  We have developed prototype code for all of them and tested it on preference data from the 2016 election.  
\end{abstract}

\newpage
\tableofcontents
\newpage

\section{Introduction}
A vote in the Australian Senate is a list of handwritten numbers indicating preferences for candidates.  Voters typically list about six preferences, but may list any number from one to more than 200.  Ballots are scanned, digitized and then counted electronically  using the Single Transferable Vote (STV) algorithm \cite{AECSenateCounting}.  

Automating the scanning and counting of Senate votes is a good idea. However, we need to update our notion of  ``scrutiny'' when so much of the process is electronic.  We suggest that, when the preference data file for a state is published, there should be a statistical audit of a random sample of paper ballots.
This should be performed in an open and transparent manner, in front of scrutineers.

Election outcomes must be accompanied by evidence that they accurately reflect the will of the voters. 
At the very least, the system should be \emph{Software Independent}~\cite{rivest2008notion}.
\begin{quote}
A voting system is \emph{software independent} if an undetected change or error in its software cannot cause an undetectable change or error in an election outcome.
\end{quote}

This principle was articulated after security analyses of electronic voting machines in the USA showed that the systems were insecure \cite{feldman2006security, kohno2004analysis, butler2008systemic, CAttbr}. The researchers found opportunities for widespread vote manipulation that could remain hidden, even from well-intentioned electoral officials who did their best to secure the systems.

Followup research in Australia  has shown election software, like any other software, to be prone to errors and security problems \cite{halderman2015new, CountingBug}.  For this reason, evidence of an accurate Senate outcome needs to be derived directly from the paper ballots.  

Legislation around the scrutiny of the count has not kept pace with the technology and processes deployed to perform the count. As a result, the scrutineering has lost a significant portion of its value. With the adoption of a new counting process the scrutineering procedures need to be updated to target different aspects of the system. The current approach might comply with legislation, but it doesn't give scrutineers evidence that the output is correct.

This paper suggests four different techniques for auditing the paper Senate ballots to check the accuracy of the published preference data files.  The techniques vary in their assumptions, the amount of work involved, and the confidence that can be obtained.

These suggestions might be useful in two contexts:
\begin{itemize}
\item if there is a challenge to this year's Senate outcome, 
\item as an AEC investigation of options for future elections.
\end{itemize}

An audit should generate evidence that the election result is accurate, or detect that there has been a problem, in time for it to be corrected.  We hope that these audits become a standard part of Australian election conduct.

\subsection{Q \& A}
\begin{itemize}
\item {\bf Q: Why do post-election audits?} \\ A: to derive confidence in the accuracy of the preference data files, or to find errors in time to correct them.
\item {\bf Q: What can a post-election audit tell you about the election?} \\ A: It can tell you with some confidence that the outcome is correct, or it can tell you that the error rate is high enough to warrant a careful re-examination of all the ballots.
\item {\bf Q: Can the conclusion of the audit be wrong?} \\ A: Yes, with small probability an audit can confirm an outcome that is, in fact, wrong.  It can also raise an alarm about a large error rate, even if the errors do not in fact make the outcome wrong.
\item {\bf Q: Who does post-election audits now?} \\
A: Many US states require by law, and routinely conduct, post-election audits of voter-verified paper votes when the tallies are conducted electronically.  Exact regulations vary---the best examples are the Risk-Limiting Audits \cite{bretschneider2012risk} conducted by California and Colorado.

\item {\bf Q: What is needed to do a post-election audit?} \\
    A: The audit begins with the electronic list of ballots, and (usually) relies on being able to retrieve the paper ballot corresponding to a particular randomly-chosen vote in the file.  There must also be  time and people to retrieve the paper ballots and reconcile them with the preference data file.  A video of random ballot selection is here: \url{https://www.youtube.com/watch?v=sdWL8Unz5kM}.
\item {\bf Q: How long does it take?  How many ballots must be examined?} \\ A: It depends on the audit method, the level of confidence derived, the size of the electoral margin and the number of errors in the sample.  This is described carefully below.    
\item {\bf What is the difference between a statistical post-election audit and a recount?} \\ A: It's not feasible to do manual recounts; a statistical post-election audit would provide a comparable way of assessing the accuracy of the  outcome.
%\item What legal authority is required to run a post-election audit in AUS?
%\item Can a post-election audit affect the election outcome?
%    What happens if the audit fails to confirm the announced election outcome?
%\item Many post-election audit methods require a variable and unpredictable
%    amount of time and resources (ballots sampled, etc.).  Is this a problem?
%\item What methods are there for doing post-election audits? How do they work?
\end{itemize}

\subsection{Our contribution}
This paper describes four suggested approaches to auditing the paper evidence of Australian Senate votes, each described in more detail in Section~\ref{sec:options}.  
\begin{description}
	\item[Section \ref{subsec:ShortBayesianAudits}] Bayesian audits~\cite{rivest2012bayesian},
	\item[Section \ref{subsec:upperBounds}] a ``negative'' audit based on an upper bound on the margin, 
	\item[Section \ref{subsec:fixedSize}] a simple scheme with a fixed sample size,  
	\item[Section \ref{subsec:conditional}] a ``conditional'' risk-limiting audit, which tests one particular alternative election outcome.

\end{description}

We have prototype code available for completing any of the above kinds of audit.  
\longVersion{The further work section contains another suggested scheme for future elections.}
This would be the first time these sort of auditing steps are being applied, and so this year's efforts would be much more ``exploratory'' in character than ``authoritative''.  We hope to be able to perform two or more kinds of audits on the same samples.  However, we do not even know, at the time of writing, whether any audit will happen at all.  

The key objective is to provide evidence that the announced election outcome is right, or, if it is wrong, to find out early enough to correct it by careful inspection of the paper evidence.

\subsection{Where the Senate count depends on trusting software}
This very brief security analysis of the current process is based on documents on the AEC's website  \cite{AECSenateDesign}.
The objective of the system is, in principle, extremely simple:
capture the vote preferences from the ballot papers, and then publish
and tally them. 

The current implementation results in a number of points of trust, in which the integrity of the data is not checked by humans and is dependent on the secure and error-free operation of the software. Whilst internal audit steps are useful, there are many systematic errors and security problems they would not detect.  We list the three most obvious examples below. 

\begin{description}
	\item[Image Scanning] There appears to be no verification that the scanned image is an accurate representation of the paper ballot. As such, a malicious, or buggy, component could alter or reuse a scanned image, which would then be utilised for both the automatic and manual data entry. This would pass all subsequent scrutiny, whilst not being an accurate representation of the paper ballot.  We understand that scrutineers can ask to see the paper ballot, but this seems very unlikely to happen if the image is clear and the preferences match.
	\item[Ballot Data Storage] Whilst a cryptographic signature is produced at the end of the scanning and processing stage, and prior to submission to the counting system, this signature is based on whatever is in the database. There is no verification that the database accurately represents what was produced by the automatic recognition or the manual operator, nor that it was the same thing displayed to scrutineers on the screen. An error, or malicious component, with access to the database could undetectably alter the contents. 
	\item[Signature Checking] Automatic signature generation is a problem in the presence of a misbehaving device. There is no restriction on the device creating signatures on alternative data. Likewise, there appears to be no scrutiny over the data being sent between the scanning process and the counting process, particularly, that the sets of data are equal. There appears to be logging emanating from both services, but no clear description of how such logs will be reconciled and independently scrutinised. 
\end{description}

In summary, there are plenty of opportunities for accidental or deliberate software problems to cause a discrepancy between the preference files and the paper votes.  This is why the paper ballots should be audited when the preference files are published.

\subsection{Background on audits}
The audit process begins with the electronic data file that describes full preferences for all votes in a state.  This file implies a \emph{reported election outcome} $R$, which is a set of winning candidates which we assume to be properly computed from the preferences in the data file.  (Actually we don't have to assume---we can check by rerunning the electronic count.)  Each line in the data file is a \emph{reported vote}---we denote them $r_1,\ldots,r_n$, where $n$ is the total number of voters in the state.  Each reported vote $r_i$ (including blank or informal ones) corresponds to an \emph{actual vote} $a_i$ expressed on paper, which can be retrieved to check whether it matches $r_i$\VTNote{\footnote{What about omissions?}}.   The whole collection of actual votes implies an \emph{actual election outcome} $A$.  We want to know whether $A = R$.

The audit proceeds by retrieving and inspecting a random sample of paper ballots.  A \emph{comparison} audit chooses random votes from the electronic data file and compares each one with its corresponding paper ballot. The auditor records discrepancies between the paper and electronic votes.  A \emph{ballot polling} audit chooses paper ballots at random and records the votes, without using the electronic vote data.  

Although the security of paper ballot processing is important, it's independent of the audit we describe here.  An audit checks whether the electronic result accurately reflects the paper evidence.  Of course if the paper evidence wasn't properly secured, that won't be detected by this process.  Our definition of ``correct'' is ``matching the retained paper votes.''

An election audit is an attempt to test the hypothesis ``That the reported election outcome is incorrect,'' that is, that $R \neq A$.  There are two kinds of wrong answer: an audit may declare that the official election outcome is correct when in fact it is wrong, or it may declare that the official outcome is wrong when in fact it is correct.  The latter problem is easily solved in simpler contexts by never declaring an election outcome wrong, but instead declaring that a full manual recount is required.  The first problem, of mistakenly declaring an election outcome correct when it is not, is the main concern of this paper.

An audit is \emph{Risk-limiting} \cite{lindemanStark12} if it guarantees an upper bound on the probability of mistakenly declaring a wrong outcome correct.  A full manual recount is risk-limiting, but prohibitively expensive in our setting.  None of the audits suggested in this paper is proven to be risk limiting, however all of them provide some way of estimating the rate of errors and hence the likelihood that the announced outcome is wrong.  In some cases, the audit may not say conclusively whether the error rate is large enough to call the election result into question.  In others, we can derive some confidence either that the announced outcome is correct or that a manual inspection of all ballots is warranted.

\subsection{Why auditing the Australian Senate is hard}

Election auditing is well understood for US-style first-past-the-post elections but difficult for complex voting schemes.  The Australian Senate uses the Single Transferable Vote (STV).  
%The data digitization is partially automated (which is a good thing), and the count itself is entirely automated.  
There are many characteristics that make auditing challenging:

\begin{itemize}
	\item {\bf It is hard to compute how many votes it takes to change the outcome.}  Calculating winning margins for STV is NP-hard in general~\cite{Xia2012}, and the parameters of Australian elections (sometimes more than 150 candidates) make exact solutions infeasible in practice.  There are not even efficient methods for reliably computing good bounds. 	
	\item {\bf A full hand count is infeasible,} since there are sometimes millions of votes in one constituency, 
	\item {\bf In practice the margins can sometimes be remarkably small.}  For example, in Western Australia in 2013 a single lost box of ballots was found to be enough to change the election outcome.  In Tasmania in 2016 there were more than 300,000 votes, but the final seat was determined by a difference of 141 votes 
	(meaning errors in the interpretation of 71 ballots might have altered the outcome).	
\end{itemize}

This makes it difficult to use existing post-election auditing methods.

To get an idea of the fiendish complexity of Australian Senate outcomes, consider the case of the last seat allocated to the State of Victoria in 2013.  Ricky Muir from the Australian Motoring Enthusiasts Party won the seat, in a surprise result that ousted sitting Senator 
Helen Kroger of the Liberal party.  In the last elimination round (round 291), Muir had 51,758 more votes than Kroger, and this was generally reported in the media as the amount by which he won.  However, the true margin was less than 3000 (about 0.1\%).  
%Her problem was that she was campaigning for the wrong person and got too many votes herself. 
If Kroger had persuaded 1294 of her voters, and 1301 of Janet Rice (Greens)’s voters, to  vote instead for Joe Zammit (Australian Fishing and Lifestyle Party), this would have prevented Zammit from being excluded in count 224. Muir, deprived of Zammit's preferences, would have been excluded in the next count, and Kroger would have won.   (Our algorithm for searching for these small margins is described in \longVersion{Section~\ref{sec:winnerElimination}.)}
\shortVersion{the full version of this paper.)}

This change could be made by altering 2595 ballots, in each case swapping two preferences, none of them first preferences, all below the line.  First preferences are relatively well scrutinised in pollsite processes before dispatch to the central counting station.  Other preferences are not.   Also \emph{lowering} a particular candidate's preference wouldn't usually be expected to help that candidate (though we are not the first to notice STV's nonmonotonicity).  So the outcome could have been changed by swapping poorly-scrutinised preferences, half of which seemed to disadvantage the candidate they actually helped, in far fewer ballots than generally expected.  

\section{Overview of available options} \label{sec:options}
This section describes four different proposals and compares them according to the degree of confidence derived, the amount of auditing required, and other assumptions they need to make.  We have already implemented prototype software for running Bayesian Audits (Section~\ref{subsec:ShortBayesianAudits}) and computing upper bounds on the winning margin (Section~\ref{subsec:upperBounds}).  We have tested the code on the AEC's full preference data from some states in the 2016 election---results are described briefly below.  

\subsection{Bayesian Audits} \label{subsec:ShortBayesianAudits}
Rivest and Shen's ``Bayesian audit''~\cite{rivest2012bayesian} evaluates the accuracy of an announced
election outcome without needing to know the electoral margin.  It samples from the posterior distribution over
profiles of cast ballots, given a prior and given a sample of the
cast paper ballots (interpreted by hand).  It only looks at a
sample of the cast paper ballots---it does not compare the sampled
paper ballots with an electronic interpretation of them.

An \emph{profile} is a set of ballots.  The auditor doesn't know the
profile of cast (paper) ballots, and so he works with a probability
distribution $p$ over possible such profiles, which summarises everything
the auditor belives about what the profile of cast ballots may be.

The Bayesian audit proceeds in stages.  Successive stages consider
increasingly larger samples of the cast ballots.

Each stage of the Bayesian audit provides an answer to the question
``what is the probability
of various election outcomes (including the announced outcome),
if we were to examine the complete profile of all cast ballots?''

This question is answered by simulating elections on profiles chosen
according to the posterior distribution based on $p$, and
measuring the frequency of each outcome.

 Each audit stage has three phases:
\begin{enumerate}
\item audit some randomly chosen paper ballots
      (that is, obtain their interpretations by a human),
\item update $p$ using Bayes' Rule,
\item sample from the posterior distribution on profiles determined by $p$
  and determine the election outcome for each;
  measure the frequency of different outcomes.
\end{enumerate}

Like any process that uses Bayes' Rule, choosing a prior is a key part
of the initialization.  The suggestion in~\cite{rivest2012bayesian} is
to allow any political partisan to choose the prior that most supports
their political beliefs.  When everyone (who uses Bayes' Rule
properly) is satisfied that the evidence points to the accuracy of the
announced result, the audit can stop.  For example, the auditors could
agree to stop when 95\% of simulated election outcomes match the
reported outcome.

In the Australian Senate case, we assume that there will be only one
apolitical auditing team (though in future candidate-appointed
scrutineers could do the calculations themselves).  Hence we suggest a
prior that is neutral---if the announced outcome is correct, this
probability distribution will be gradually corrected towards it.

An alternative, simpler version amounts to a bootstrap, treating the
population of reported ballots as if it is the (prior) probability
distribution of ballots, and then seeing how often one gets the same
result for samples drawn from that prior.  
This gives an approximate
indication of how much auditing of paper ballots would be necessary, assuming that 
the paper ballots were very similar to the electronic votes.
We have run this version of the audit 
on the Senate outcome from 2016.  
Table~\ref{tab:bootstrapping} shows the number of samples needed in the bootstrapping version, in order to get 95\% of trials to match the official outcome.
Tasmania is the closest, and the only one that's really infeasible: a sample size of about 
250,000 ballots is needed before 95\% of trials produce the official outcome, which is not much better than
a complete re-examination of all ballots.  This is hardly surprising given the closeness of the result. 
Queensland requires 23,000, which is still only a tiny fraction of the total ballots.  
Apart from that, all the other states require only a few thousand samples.

\begin{table}
\begin{tabular}{lll}
{\bf State} & {\bf Number of votes (millions)} & {Audit sample size (thousands)} \\
%ACT & & \\
NSW & 4.4 & 4.6 \\
NT & 0.1 & 1.5 \\
Qld & 2.7 & 23 \\
SA & 1.1 & 3 \\
Tas & 0.34 & 250 \\
Vic & 3.5 & 6 \\
WA  & 1.4 & 9 \\
\end{tabular}
\caption{Sample sizes for 95\% agreement in bootstrap Bayesian Audit.}
\label{tab:bootstrapping}
\end{table}

We suggest a combination of the bootstrapping method with the retrieval of
paper ballots: have a single short partial ballot in
favor of each candidate, combined with an empirical Bayes approach that
specifies that only ballots of the forms already seen in the sample
(or the short singleton ballots) may appear in the posterior distribution.

Although these audits were designed for complex elections, there are
significant challenges to adapting them to the Australian Senate.  
Running the simulations efficiently is challenging when the count itself takes
some time to run.  Answers to these challenges are described in
\shortVersion{the full version of the paper.}\longVersion{Section~\ref{sec:bayes}.}

\subsection{Upper bounds on the margin plus ``negative'' audits}  \label{subsec:upperBounds}
We have implemented some efficient heuristics for searching for ways to change the election outcome by altering only a small number of votes---the code is available at \url{https://github.com/SiliconEconometrics/PublicService}.  The Kroger/Muir margin described in the Introduction is an example.   \longVersion{The algorithm is described in~Section~\ref{sec:winnerElimination}.} 
%The other consists of extending techniques for computing the IRV margin of victory to the last rounds of an STV count.  
We can guarantee that the solution we find is genuine, {\it i.e.} a true way to change the outcome with that number of ballots, but we can't guarantee that it is minimal---there might be an even smaller margin that remains unknown.  The algorithm produces a list of alternative outcomes together with an upper bound on the number of votes that need to change to produce them.

%We can approach this from either direction. 
%Given the high error rate we expect, compared to the number of %errors required to change the outcome,O
If the error rate is demonstrably higher than this upper bound on the margin, then we can be confident it is large enough to change the election result. Of course, it does not follow that the election result is wrong, especially if the errors are random rather than systematic or malicious.  It means that all the paper evidence must be inspected.  

This allows a ``negative audit,'' which can allow us to infer with high confidence that the number of errors is high enough.   

Suppose there are $N$ ballots in all.
Suppose we know that the outcome could be altered by altering no more than $X$ ballots in all, provided those ballots were suitably chosen.  Suppose
we think the true ballot error rate $p$ (ballots with errors divided by total ballots, no matter how many errors each ballot has) is $q$, with $qN \gg X$; that is, we think the error rate is large enough that the outcome could easily be wrong.  Then a modest sample
of size $n$ should let us infer with high confidence that $pN > X$. 

For example, consider the 2016 Tasmanian Senate result, in which the final margin was 71 out of 339,159 votes (a difference of 141 votes).  We can compute the confidence bounds based on a binomial distribution.  A lower 95\% confidence bound for $p$ if we find 3 ballots with errors in a sample of size 2500 is about 0.0003.  That's much greater than the error rate of 
$71/339,159 = 0.00021$ that would be needed to change the outcome.
If we did find errors at about that rate, it would be strong evidence that a full re-examination of all the paper ballots is warranted.  Code for this and other probability computations in this paper is available at \url{https://gist.github.com/pbstark/58653bbc26f269d4588ea7cd5b2e12bf}.

\longVersion{Section~\ref{sec:comparisonAudits} details the statistical analysis and some sharper ways of distinguishing errors that are likely to make a difference to the result.  \VTNote{Actually I'm not sure it does.  Not sure there are any useful sharper distinctions.  Are the stats just binomial bounds again?}}

\subsection{Audits of fixed sample size} \label{subsec:fixedSize}
A much simpler alternative is to
take a fixed sample size of paper ballots (e.g. 0.1\% of the cast ballots), draw that
many ballots at random and examine them all.

This conveniently puts a ``cap'' on the number of randomly-chosen 
paper ballots to be examined, but 
the audit results may provide less certainty than an uncapped
audit would provide.

\subsubsection{Risk-measuring audits}

Assume now that the aim is to try to find confidence that the election outcome is correct.  This audit could quantify the confidence in that assertion, by computing binomial upper confidence bounds on the overall error rate.    The idea is to find the p-value (or confidence level) that the sample you actually have gives you that the outcome is right.

Even an error rate of 0.0002, {\it i.e.}, two ballots with errors per 10,000 ballots, could have changed the electoral result in Tasmania, depending on the exact nature of those errors. The sample size required to show that the error rate is below that threshold---if it is indeed below that threshold---is prohibitively large. 
If we take a sample of 1,000 ballots and we find no errors that affect the 71 margin, the measured risk is the chance of seeing no errors if the true error rate is 0.0002, {\it i.e.,} $(0.0002)^0 * (1-0.0002)^{1000} = 81\%.$
If we took a sample of 2,000, the measured risk would be $(0.0002)^0 * (1-0.0002)^{2000} = 67\%.$

However, this method might be quite informative for other contests.   Manual inspection of a sample of 1,000 ballots could give 99\% confidence that the error rate is below 0.0046 (46 ballots with errors per 10,000 ballots), if the inspection finds no errors at all. If it finds one ballot with an error, there would be 99\% confidence that the error rate is below about 0.0066 (66 ballots with errors per 10,000 ballots).

Similarly, manual inspection of a sample of 500 ballots could give 99\% confidence that the error rate is below 0.0092 (92 ballots with errors per 10,000 ballots), if the inspection finds no errors at all. If it finds one ballot with an error, there would be 99\% confidence that the error rate is below about 0.0132 (132 ballots with errors per 10,000 ballots).

If more errors are found, this gives a way to estimate the error rate.  If it is large, this would give a strong argument for larger audits in the future.

\subsubsection{Fixed-size samples with Bayesian Auditing} 
We can also derive some partial confidence measures from the
given sample.  For example, you could list, for each candidate,
the precentage of the time that candidate was elected across the Bayesian
experiments.  (Each experiment starts with a small urn filled with
the 14000 ballots, plus perhaps some prior ballots, and expands
it out to a full-sized profile of 14M ballots with a polya's urn
method or equivalent.  This is for a nationwide election; for the
senate the full-size profiles are the size of each senate district.)
Depending on the computation time involved, we might run say
100 such experiments.  So, you might have a final output that says:

\begin{tabular}{ll}
Joe Jones    & 99.1 \% \\
Bob Smith    & 96.2 \% \\
Lila Bean      & 82.1 \% \\
$\ldots$    & \\
Rob Meek       & 2.1 \% \\ 
Sandy Slip      & 0.4 \%   \\
Sara Tune       & 0.0 \%   \\
\end{tabular}

Such results are meaningful at a human level, and show
what can be reasonably concluded from the small sample.

This allows us to have a commitment to a given
level of audit effort, rather than a commitment to a given level
of audit assurance, and then give results that say something about
the assurance obtained for that level of effort.

%Since time is very short, and the uncertainty of more rigorous audit methods is a major drawback, this alternative might be the best for this year.

\subsection{Conditional Risk Limiting Audits} \label{subsec:conditional}

Back to the Tasmanian 2016 example again. One way to examine the issue is to consider the particular, most obvious, alternative hypothesis, {\it i.e.} that the correct election result differs only in changing the final tallies of the last two candidates.  If we assume that all the other, earlier, elimination and seating orders are correct, we can conduct a risk-limiting audit that tests only for the one particular alternative hypothesis.  (Of course, it isn't truly risk limiting because it doesn't limit the risks of other hypotheses.) This may be relevant in a legal context in which a challenging candidate 
asserts a particular alternative. This method would provide evidence that the error rate is small enough to preclude that alternative (if indeed it is), without considering other alternatives.

This can be run as a ballot-level comparison audit, in which the electronic ballot record is directly compared with its paper source.  When an error is detected, its impact on the final margin can be quantified  (a computationally infeasible problem when considering all possible alternative outcomes).  
A risk-limiting audit could be based on the Kaplan-Markov  method from~\cite{stark2008conservative}.  It allows the sample to continue to expand if errors are found: that is, it involves sequential testing. 
%(The particular method is the Kaplan-Markov method using constants that make the ``cost'' of a two-vote overstatement 5 times the ``cost'' of a 1-vote overstatement.) 
At 1\% risk limit, the method requires an initial sample size of about (10/margin), where the margin is expressed as a fraction of the total ballots cast. Here, that's about 0.0002.
A risk limit of 5\% would require hand inspection of roughly 16,000 ballots, assuming no errors were found.
%\footnote{These numbers are derived from tools at {\url{https://www.stat.berkeley.edu/~stark/Java/Html/auditTools.htm}} } 

\shortVersion{}\longVersion{More details about this method are in Section~\ref{sec:ConditionalRLAs}.}

\subsection{Summary}
These four different audit methods could each be conducted on the same dataset.  We would generate the sample by choosing random elements of the official preference data file, then fetching the corresponding paper ballot.  The Bayesian Audit and the simple capped scheme would then simply treat the paper ballots as the random sample.  The upper-bounds based scheme and the conditional risk limiting audiit would consider the errors relative to what had been reported.

There are important details in exactly how the audit is conducted.  We suggest that the auditors not see the electronic vote before they are asked to digitize the paper---otherwise they are likely to be biased to agree.  However, we also suggest that they are notified in the case of a discrepancy and asked to double-check their result---this should increase the accuracy of the audit itself.  Details of this process are interesting future work.  It is, of course, important that the audit itself should be software independent.

If the rate of error is high then a high level of auditing is required.  With few or no errors, our best estimates of the necessary sample size for each technique applied to the Tasmanian 2016 Senate are:
\begin{itemize}
	\item for Bayesian audits, about 250,000 samples until 95\% of trials match the official outcome,
	\item for ``negative'' audits, a sample that found 3 or more errors out of 2500 ballots would give a 95\% confidence bound on the error rate (being big enough),
	\item a fixed sample size of 500 or 1000, even with no errors, seems unlikely to be large enough to infer anything meaningful for Tasmania 2016, though it may be useful for other contexts,
	\item a conditional risk-limiting audit  would require about 16,000 ballots for a risk limit of 5\%, assuming no errors were found.
\end{itemize}

Most other states would probably be easier to audit as they do not seem to be as close.

\longVersion{
\section{Bayesian Audits and how to use them in the Australian Senate}
\label{sec:bayes}
\VTNote{Currently just copying Ron's email.}

These elections are ranked-choice ballots with many candidates (maybe 100 or
so).  So, it is impossible to explicitly represent all possible ballots in
order to explicitly give a prior probability to each.  

However, these elections do allow partial ballots, where not
all candidates are listed. (Unmentioned ones effectively rank last.)

One purpose of the prior is to make sure that for small profiles you
aren't misled by small-sample statistics.  It doesn't seem fair
to confirm an election result if ballots in favor of one candidate don't
even appear in the sample.   The prior puts a thumb on the scale in
favor of ``fairness''---ensuring that all candidates are equally and
positively represented at the start.  I think this ``fairness towards
candidates'' is more important than ``fairness towards ballot types''.

So, we are proceeding along the path of representing a prior by
having one ballot cast for each candidate as a partial ballot ({\it i.e.},
listing only that candidate).  

It does not, obviously, give equal weight to all possible ballots.
The prior effectively only has support for the ballots actually 
occurring in the sample, and for the singleton ballots.  
%There
%may be a better way of thinking about this process, and fitting
%it into the Bayesian framework may or may not be exactly correct.
%Do you have thoughts on this?

(In the end, we are just effectively computing various ``noisy''
or ``fuzzy'' versions of the drawn sample, and seeing how this
affects the election outcome computation.)

The other approach that comes to mind would be to somehow generate
new ballot types de novo as the audit proceeds, and add them to
the urn.  This is not an entirely unreasonable thought, and there
is stuff in the literature that seems quite similar to this (e.g. the
literature on stick-breaking or Chinese restaurants in the Bayesian
literature).

However, there is another reason why our adopted procedure 
of having one prior ballot per candidate is attractive: efficiency.
With our procedure, the number of distinct ballot types does
not increase during the audit.  The ``Polya's Urn process'' or
equivalent only changes the frequency of each existing ballot
type---it doesn't add new types.

This means that the ``state of the urn'' is always fully representable
as a vector of length $t$, where t is the number of distinct ballot
types in the sample and prior.  (If the sample has size $s$, then $t$
is at most $s+m$ where $m$ is the number of candidates.)

In particular, this means that we can employ the gamma-variate
trick (see the Rivest/Shen paper) to do the audit, replacing the Polya's Urn
computation (which takes time $n$, where $n$ is the number of cast
ballots) with a computation that takes only time $t$ (computing
one gamma-variate for each ballot type, in order to obtain the
final count for the number of ballots of that type).  This improves
running time by a factor of $n/t$, which can be very large (e.g.
$n=10^6$ and $t=10^3$ gives a 1000-fold speedup.)

} % end longVersion

\longVersion{
\section{Heuristics for upper bounds on STV margins}  \label{sec:winnerElimination}
% We can use text from the RealSTVMargins paper 
\VTNote{Add a discussion on Michelle's method for finding upper bounds by looking only at the last few (20 or so) rounds.}

This section explains a way of searching for small manipulations in STV elections.  The code is available at 
\url{https://github.com/SiliconEconometrics/PublicService}.

The last subsection suggests how this could be included in an audit.  The result is not really a risk-limiting audit, but rather an audit that would be risk limiting if certain assumptions are true.  The assumptions will be precise but computationally intractable to check.   Alternatively, the upper bounds could be used in a ``negative'' audit, as described in the introduction, to provide confidence that the error rate \emph{is} large enough to matter.

The idea generalises Cary's ``winner-elimination upper bound'' to STV.  The high-level idea is simple: for each candidate $w$ who won \emph{without gaining a quota}, at each step in which some other candidate $e$ is eliminated, we calculate how many votes would have to be moved from $w$ (or some other candidates with large tallies) to $e$ (and some other candidates with small tallies) to get $W$ eliminated instead.   Clearly this changes the outcome, though we don't immediately know how.
This gives us an upper bound on the size of the manipulation necessary to change the result.  

There may, of course, be smaller manipulations that are not discovered by this method.  
When a solution is found, there may be many different ways to achieve it, corresponding to several different election outcomes.  

There are two slightly different variants, each corresponding to different assumptions about the source of error.  In the first variant, we allow any sort of change to ballots.  In the second, we assume that first-preferences are recorded accurately (in the polling place, under scrutiny) and that the only successful manipulation is one that leaves the first preference tallies unchanged.  In both cases there are a number of different variants and details that matter.  Although the idea generalises to any form of STV, we have implemented it for two particularly interesting ones: the idealised version of STV we used in our computation of margins \cite{blom2015efficient}, and a precise implementation of the Australian Senate counting rules.

The Australian Senate rules have a number of complicating issues, including complex rules for multiple eliminations.  For this reason, we describe the idealised version of the algorithms first, then describe how we deal with some of the complexities of the Senate rules in a later section.

\subsection{Theoretical/ideal Preliminaries}
We have an announced set $W$ of winners.  $W = \{w_1, w_2, \ldots, w_s \}$ where $s$ is the number of seats.  There are two different ways of winning: $w$ may get a quota, or $w$ may be one of only $k$ candidates to remain uneliminated at the end, when there are only $k$ seats remaining to be allocated.  Sometimes these two conditions might be met simultaneously.  Let $L \subseteq W$ be the set of candidates who win by being left uneliminated at the end.  At each step, for each winner $w \in L$, we want to shift enough votes from $w$ to get $w$ eliminated at (or before) that point.  Then we want to rerun the election and see who wins instead.

Two important things to check:
\begin{itemize}
	\item that the prior elimination rounds are not affected or, if they are, that $w$ gets eliminated, and
	\item that if we shift votes that have been used to elect a candidate, the manipulation size is counted in terms of number of paper ballots, not the sum of the weights as a result of transfers.
\end{itemize}

Let $t_i(c)$ be the tally of candidate $c$ at round $i$.

\subsubsection{Variant 1: when first preferences are allowed to change}
We try to remove $m$ votes from $w$ and share that among the low-tally candidates until all other tallies are higher than $w$'s.

For each round $i$, if candidate $e$ is eliminated at round $i$,

For each $w \in W$, set 
\begin{equation} m_i(w) = \min \{m :  \forall \text{continuing candidates }  c \neq w, 
	t_i(w) - m < t_i(c) + \delta_c  \text{ where } \Sigma_c \delta_c = m   \}  
\label{eqn:margin} \end{equation}

\begin{lemma}
If $w$ has $m_i(w)$ first-preference votes, then this will produce a valid solution.  
\end{lemma}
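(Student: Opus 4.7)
The plan is to exhibit the manipulation promised by the lemma and verify it changes the outcome. Concretely, choose $m_i(w)$ of the ballots that currently list $w$ first and re-mark them so that, for each continuing-at-round-$i$ candidate $c \neq w$, exactly $\delta_c$ of the chosen ballots now list $c$ first, where $(\delta_c)_c$ is any assignment achieving the minimum in equation~(\ref{eqn:margin}). The goal is then to show that in the resulting ballot profile, $w$ is eliminated by round $i$ at the latest; since $w$ was in the announced winner set $W$, that alone changes the outcome, which is exactly what ``valid solution'' means.

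The core argument is an induction on the round number $j=1,\dots,i-1$ showing that the modified count eliminates the same candidate $e_j$ in round $j$ as the original count. Only $w$'s tally and the round-$i$-continuing candidates' tallies are affected by the shift, so the originally-eliminated candidate $e_j$ keeps its tally $t_j(e_j)$, while every other candidate's tally is at least as large as in the original, the sole exception being $w$ whose tally drops by $m_i(w)$. Provided this drop does not push $w$ below $e_j$, we still get $e_j$ as the unique round-$j$ minimum and the induction continues. Once the trajectory is preserved through round $i-1$, the round-$i$ tallies in the modified election are exactly $t_i(w)-m_i(w)$ for $w$ and $t_i(c)+\delta_c$ for each continuing $c \neq w$; by the defining inequality of equation~(\ref{eqn:margin}), $w$ falls strictly below every other continuing candidate and is eliminated, completing the proof.

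The main obstacle is STV's well-known non-monotonicity, which makes the trajectory-preservation step genuinely delicate: the definition of $m_i(w)$ constrains only the round-$i$ tallies, so \emph{a priori} the reduction could push $w$ below $e_j$ for some $j<i$. Two fixes are possible and I would make the choice explicit. Either one strengthens the definition of $m_i(w)$ by adjoining the side constraints $t_j(w)-m \geq t_j(e_j)$ for every $j \leq i$, in which case the trajectory really is preserved and the above proof goes through unchanged; or one keeps the definition as stated and observes that an earlier elimination of $w$ also removes $w$ from the winner set, which suffices because in every STV variant considered here an eliminated candidate cannot subsequently be seated. The first fix is cleaner and is probably what the authors intend; the second is a graceful degradation that preserves $m_i(w)$ as a legitimate upper bound on the manipulation size without needing to analyze the cascade of transfers that a premature elimination of $w$ would trigger.
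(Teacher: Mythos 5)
Your overall strategy matches the paper's: exhibit the shift of $m_i(w)$ first-preference ballots from $w$ to the continuing low-tally candidates, argue the pre-round-$i$ trajectory is undisturbed, and conclude $w$ is eliminated at (or before) round $i$. Your ``second fix'' is in fact exactly how the paper resolves the non-monotonicity worry you raise: if the shift causes $w$ to be eliminated before round $i$, that is already a valid solution, since $w$ was an announced winner and an eliminated candidate cannot be seated. So no strengthening of the definition is needed or intended; indeed your preferred ``first fix'' changes the definition of $m_i(w)$ and therefore would not prove the lemma as stated. You should present the early-elimination case as the easy branch of a case split, not as a degradation.

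The genuine gap is that your trajectory-preservation induction only compares tallies at elimination rounds and never rules out that the added $\delta_c$ votes push a recipient over the quota before round $i$. If that happened, a round that was an elimination in the reported count would become a seating in the modified count, the recipient's surplus would be redistributed, and your claim that ``only $w$'s tally and the round-$i$-continuing candidates' tallies are affected'' would fail, taking the induction (and the round-$i$ conclusion) down with it. The paper closes this hole explicitly: since $w$ is still continuing at round $i$ it has less than a quota there, i.e.\ $t_i(w)$ is below the quota, and the minimizing assignment gives each recipient only enough votes to bring it up to roughly $t_i(w)-m_i(w)$, hence at most $t_i(w)$; as continuing candidates' tallies are non-decreasing over rounds, no recipient ever reaches a quota before round $i$, so seating rounds are unchanged and only elimination rounds need the minimum-tally comparison you give. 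You also omit the paper's closing accounting remark that all shifted ballots are first preferences with transfer value $1$, so the manipulation size measured in ballots equals the total weight moved; this is minor but is part of why the bound counts paper ballots correctly.
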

\begin{proof}
If the shift gets $w$ eliminated before round $i$, then that's a valid solution.  

So suppose that $w$ is still standing at round $i$.  Adding extra votes to other candidates can't have affected the elimination order before now, because it could only have increased the tallies of uneliminated candidates.  (Detail: we need to argue that we also haven't accidentally given someone a quota who wouldn't otherwise have had one.  This is true because $w$ must have less than a quota at $i$ (or it's not continuing), and the recipients have received only enough votes to bring them up to $t_i(w)$ at most.)

Also note all the transfer values are 1, so the total number of ballots is equal to the total weight of the votes.
\end{proof}

If we allow non-first preferences to change too, we need to check that the manipulation doesn't affect the elimination order before $i$, or that if it does then $w$ gets eliminated anyway.  A very similar argument applies: simply take the preference that lists $w$ and substitute a preference for $c$ instead (removing any subsequent mention of $c$).  This will affect only tallies too high to have been eliminated by round $i$ and too small to have a quota.  However, we now need to be more careful about accounting for manipulations on ballots with a transfer value less than 1.  The minimum manipulation must be counted in terms of the number of ballots, not the total weight of them.  We explore this idea more precisely below.

\subsubsection{Variant 2: when first preferences must stay constant}
The next section assumes that there is some other, independent and fixed, list of first preferences, and that any manipulation must therefore leave them unchanged.  The idea is the same, but instead of removing (first) preferences from $w$, we remove (non-first) preferences for $w$ that are currently sitting on $w$'s pile.  These preferences are shifted across to (other) low-tally candidates, until enough have been shifted to get $w$ eliminated.  It's possible that a solution might exist in Variant~1, but none in Variant~2.

If there aren't enough available preferences on $w$'s pile, we can take some preferences from candidate higher than $w$.  This could sometimes affect the earlier elimination order, so the resulting solution must be re-run through the count to check that it is valid.

Now because we're not dealing with first preferences, we need to be careful to account for whole ballot papers even when the transfer values are less than one.  We need to minimise the total number of ballots shifted, given the same restrictions on tallies expressed in Equation~\ref{eqn:margin}.

\subsection{Practicalities and optimisations for the Senate}
The above section gives simple computations for computing the bounds directly.  However, the true counting algorithm for the Australian Senate is complex and fiddly.  For example, these modifications may affect rounding and the rules for multiple eliminations in ways that are hard to test for.  Other senate practicalities:
\begin{itemize}
\item  ATL votes are harder to manipulate (before 2016, impossible. after 2016, tricky)
\item  Need to use a different technique (stuff from other people) to get anything in 2013. Binary search is not used here.
\item  Tie resolution is messy.  We solve this by just overcompensating a little, to make sure there are no ties.
\item  Transfer values make which votes  to use messy (e.g a 0.4TV from $w$ vs a 1.0TV from some high scorer). We arbitrarily (and potentially inefficiently) assume that all votes from $w$ are used in preference to other people’s, when first-preference changes are allowed, but otherwise order by transfer value.    
\item Rounding is very hard to account perfectly for, and may produce small errors.
\end{itemize}

We solve this problem by simply doing a binary search for $m$, rather than computing $m$ directly, and recomputing the election outcome on the modified data to check that it is correct. 

\subsubsection{Further optimisations}
In order to eliminate $w$, it might not be necessary to eliminate them at the round where they are numerically closest to elimination.  In other words, the description above might be correct, but we might be able to achieve the same effect with fewer manipulations, if we let some of the low-tally candidates get eliminated before $w$.  This suggests the following expanded search:

Once finding the best solution using the techniques described above, take each low-tally candidate $l$ and let $r(l)$ be the votes that $l$ has received.  Do a binary search adding between 0 and $r(l)$ votes to $l$, to see whether $w$ does indeed get eliminated (in a later round) when $l$ receives fewer than $r(l)$ votes.  Reset $l$'s extra votes to the minimum $r'(l)$ for which $w$ still gets eliminated.  Iterate through all low-tally candidates, twice.  In practice we find this significantly reduces some of the margins we found.

This method discovered the Muir/Kroger example described in the Introduction.

\VTNote{I reckon we need a table like Michelle's table, listing apparent (last round) margin, margin by changing first pref's, margin without changing first pref's, and optimized margin, for a few key races from 2013 and 2016.}<++>

} % end longVersion

\longVersion{
\section{Ballot-level comparison audits against a particular alternative: statistics and counting details} \label{sec:conditionalRLAs}

We do not know how to conduct risk-limiting audits on Australian Senate elections.  However, sometimes a dispute is oriented around a single, specified, alternative election outcome.  The most obvious of these is to switch the winner among the last two candidates in the running for the final seat.  This extends easily to the case when $k+1$ candidates are still present when $k$ seats are available---the lowest loses, and the other $k$ win seats. 

%Suppose we believe (and I think it's probably true) that that 141-vote margin in the last round is truly the smallest margin, probably by a long way.

A \emph{ballot-level comparison audit} selects a random sample of electronic ballots from the file, then compares each individual paper ballot with its electronic record.  Discrepancies are carefully counted---we want to understand whether there are enough discrepancies to suggest that the paper ballots produce a different outcome from the official (electronic) one.   It is \emph{risk-limiting} if the probability of incorrectly accepting a wrong outcome is bounded.

We do not know how to make such an audit truly risk-limiting for Australian Senate races, because it is infeasible to compute the effect of discrepancies, particuarly when they are combined (remember the Muir/Kroger example).  However, if we are only testing against one particular hypothesis, and that hypothesis doesn't change any of the early elimination and seating orders, we can perform a ``conditional'' risk limiting audit.  It is conditional in the sense that it bounds the probability of wrongly accepting the official outcome when the truth is alternative hypothesis $H$, though it guarantees nothing about still-other hypotheses.

In particular, we design a ballot-level comparison RLA around falsifying the hypothesis, ``That the outcome is wrong because of a change that switched the last two candidates \emph{but didn't affect the order of eliminations or seatings in prior rounds.}

This is relevant, for example, in the 2016 Tasmanian example.  We don't know whether the 141-vote final margin was minimal, but this kind of audit could test just that margin.

Suppose the last-round difference is $M$, between candidates $w$ (the winner) and $l$ (the loser).  Consider how to examine any discrepancy (between data file and paper) and assess exactly how much it contributed to that final margin, assuming the apparent elimination and seating order.  

For example, if a vote for $l$ appears, or a vote for $w$ disappears, after a candidate who's seated with 2 * quota (and no other quota-getters), then we can say that that error would take $1/2M$ off the margin. 

We list the discrepancies according to what appears (or disappears) on the paper ballot, compared to what's in the data file.  The obvious ones are those that mention $l$ or $w$ explicitly:

\begin{tabular}{ll}
{\bf Discrepancy} & {\bf Effect} \\

$l$ disappears & none \footnote{Don't count discrepancies that enhance the likelihood of the official solution.}\\
$l$ appears below $w$ & none \\
$l$ appears with no seated candidates above & 1 \\
$l$ appears below a series of candidates seated & \\
 with a quota, the last one\footnote{Under Australian Senate rules, only the last transfer value matters.  With a weighted transfer method, this would have to consider the product of all effects from successive surplus transfers.} with surplus $s$ & $s/(1+s)$ \\
$w$ appears & none \\
$w$ disappears below $l$ & none \\
$w$ disappears with no seated candidates above & 1 \\
$w$ disappears below a series of candidates seated & \\
 with a quota, the last one\footnote{Under Australian Senate rules, only the last transfer value matters.  With a weighted transfer method, this would have to consider the product of all effects from successive surplus transfers.} with surplus $s$ & $s/(1+s)$
\end{tabular}

More subtle, but important, discrepancies are those that affect the transfer values themselves.  Suppose that a candidate $c$ who transferred half their preferences to $l$ when seated actually has a far larger surplus than the electronic records indicate.  If we disover evidence of this, we need to count it. The appearance or disappearance of a votes for a seated candidates can, in this way, affect the final tallies by at most $1/q$.  \VTNote{**Check.} Hence we count all such errors as $1/qM$.

\VTNote{
Then all we need is a way to tally up errors which all count somewhere between 0 and $1/qM$ towards the same margin.  Can I adapt MACRO from ``Conservative statistical post-election audits''? Or is there a reason that multiple-voting doesn't necessarily correspond to different magnitudes of shift?}

This then gives us an RLA with an explicit (probably widely accepted) assumption.  This would say something specific, though with the big assumption of having eliminated only the one most plausible-looking kind of wrong result.  If the other methods based on Bayes and Winner-Elimination heuristics and a fixed sample were inconclusive, this might be a reasonably convincing argument that the announced result is right.  

For small margins, the sample size would still be large.  
For example, consider the sample size needed to show with high confidence that there are not 141 ``important'' errors in the Tasmania 2016 votes.
We'd have to find zero errors that mattered in a random sample of about 7,500 ballots to conclude that the population error rate was less than 0.0004, at 95\% confidence.
}%end longVersion

\section{Implementation Summary}
All the tools necessary for conducting a Bayesian audit of Australian Senate votes are available as a Python package at \url{https://pypi.python.org/pypi/aus-senate-audit}, with code and instructions at \url{https://github.com/berjc/aus-senate-audit}.

Code for searching for small successful manipulations is at \url{https://github.com/SiliconEconometrics/PublicService}.

Code for computing relevant statistical bounds is at \url{https://gist.github.com/pbstark/58653bbc26f269d4588ea7cd5b2e12bf}.

\section{Conclusion}
Elections must come with evidence that the results are correct.  This work contributes some techniques for producing such evidence for the partly-automated Australian Senate count.

All of the audits discussed here can be conducted immediately, using code already available or specifically produced as a prototype for this project.

\subsection{Future Work}
In the future we could expand the precision with which we record errors and make inferences about their implications.
We are also pursuing an easier user interface for administering the audit.

\longVersion{

On Thu, Aug 4, 2016 at 3:04 AM, Damjan Vukcevic <damjan@vukcevic.net> wrote:

    Hi everyone,

    I'm back from my holiday.  I see that you have discussed this topic quite extensively.  Vanessa summarised my proposal earlier, while I was away.  Hopefully you understood the main ideas.  In any case, she has asked me to reiterate it in a bit more detail.  I've done this below.

    While I haven't read through the whole email discussion in detail (so you might have mentioned this already somewhere), I think the key difference between a 'black box' type of approach and one like mine is that we get estimates of the error rates, and don't counfound this with any uncertainties due to the vote being a close contest.  Thus, we have the ability to make some conclusions about the general reliability of the counting process, independently of any specific election outcome.  I think that is important in being able to make a case for the utility of these methods, especially in the current Senate election context where we have many close results.

    Please note that when I initially spoke with Vanessa, I wasn't familiar with any of the existing literature.  So I guess it's no surprise that there are similarities between my ideas and some of your work.  If anything, that's reassuring!  (FYI, I haven't yet read the Bayesian auditing paper, so I'm not sure how much what I've written below overlaps with it.)

    At a high level, the idea had two steps:

\begin{enumerate}
\item Use a sample of the ballots to determine estimates of errors in the whole ballot counting process (i.e. including all of the steps, such as scanning, character recognition, etc.).  \label{sample}

\item Use these estimates together with the final published counts to determine the certainty of the final set of elected candidates, by simulating the effect of the errors across the whole set of counts.  \label{useEstimates}
\end{enumerate}

    Since the aim is to compare the paper ballots to the final digital file of preferences, and there is a possibility that some of these won't match perfectly (indeed, we wish to detect how often this is the case), for step 1 it would be necessary to take random samples from both sources.  That means that, at a high level, we'll end up cataloguing three types of errors:

\begin{enumerate}[(a)]
    \item Paper ballots with no matching digital record
    \item Digital records with no matching paper ballots
    \item Paper ballots that do not match their digital record \label{paperNoMatch}
\end{enumerate}

    Errors of type~\ref{paperNoMatch}  can then be further classified by the nature of the mismatch, if practical or desired.  Given the large number of possible votes, it would be impractical to estimate a full mismatch matrix.  However, we can try simple things like seeing how often a first preference vote for candidate X is mistranscribed as a first preference vote for candidate Y.

    Step~\ref{useEstimates}  would be naturally cast in a Bayesian framework, where we wish to calculate a posterior distribution on the set of elected candidates.  We would do so by simulating errors in a way that is consistent with the estimates from step~\ref{sample}.  (There are clearly many ways one might set this up, I haven't thought about exactly what the best way is.  It is related to the question of how to summarise the errors of type~\ref{paperNoMatch}  above.)  We can replicate this many times and thus get a probability distribution.

    On its own, Step~\ref{sample} would be informative about the nature and reliability of the whole counting process, which is the main issue of interest.  Step~\ref{useEstimates} takes it further and translates this to the results of the current election, thus quantifying the impact of the errors (and our knowledge of them) to the election results.  The latter is easier to interpret, but it is answering a slightly different (but nonetheless relevant) question.

The challenge is classifying errors into the
error types of a model, so that you can replay the model
against a sample of ballots.   This only gets applied
``one-way'', though --- you want to apply the error model
learned from a sample of paper ballots to all of the electronic
records corresponding to paper ballots that haven't been
examined.
} % end longVersion
\bibliographystyle{alpha}
\bibliography{e-vote}
\end{document}